\newcounter{myctr}
\def\myitem{\refstepcounter{myctr}\bibfont\noindent\ifnum\themyctr>9\else\phantom{0}\fi\hangindent17pt\themyctr.\enskip}
\newcommand{\ket}[1]{\ensuremath{\left|{#1}\right\rangle}}
\newcommand{\bra}[1]{\ensuremath{\left\langle{#1}\right |}}
\newcommand{\Trr}[1]{\textrm{Tr}\!\left[#1\right]}
\newcommand{\ba}[1]{
	\begin{align}
		#1
\end{align}}
\newcommand{\av}[1]{\overline{#1}}
\begin{document}

\catchline{}{}{}{}{}

\title{R\'enyi-Holevo inequality from $\alpha$-$z$-R{\'e}nyi relative entropies}

\author{Diego G. Bussandri$^{1,2}$, Grzegorz Rajchel-Mieldzio{\'c}$^{3}$, Pedro W. Lamberti$^{2,4,5}$,\\
 Karol {\.Z}yczkowski$^{5,6}$}

\address{$^1$Instituto  de  F\'isica  La  Plata  (IFLP)  and  Departamento  de  F\'isica,  Facultad  de Ciencias Exactas, Universidad Nacional de La Plata, C.C. 67, 1900 La Plata, Argentina}

\address{$^2$Consejo Nacional de Investigaciones Científicas y Técnicas de la República Argentina (CONICET), Av. Rivadavia 1917, C1033AAJ, CABA, Argentina}

\address{$^3$Institut de Ciencies Fotoniques (ICFO), The Barcelona Institute of Science and Technology, 08860 Castelldefels, Barcelona, Spain}

\address{$^4$Facultad de Matem\'atica, Astronom\'{\i}a, F\'{\i}sica y Computaci\'on, Universidad Nacional de C\'ordoba, \\ Av. Medina Allende s/n, Ciudad Universitaria, X5000HUA C\'ordoba, Argentina}
\address{$^5$Faculty of Physics, Astronomy and Applied Computer Science, Institute of Theoretical Physics, Jagiellonian University, ul. {\L}ojasiewicza 11, 30-348, Krak{\'o}w, Poland}
\address{$^6$Center for Theoretical Physics, Polish Academy of Sciences, Al. Lotnik{\'o}w 32/46, 02-668 Warszawa, Poland}

\maketitle


\begin{abstract}

We investigate
bounds in the transmission of classical information through quantum systems. Our focus lies in the generalized Holevo theorem, which provides a single-letter Holevo-like inequality from arbitrary quantum distance measures. Through the introduction of the $\alpha$-$z$-R{\'e}nyi relative entropies, which comprise known relevant quantities such as the R{\'e}nyi relative entropy and the sandwiched R{\'e}nyi relative entropy, we establish the Holevo-R{\'e}nyi inequality. This result leads to a quantum bound for the $\alpha$-mutual information, suggesting new insights into communication channel performance and the fundamental limits for reliability functions in memoryless multi-letter communication channels.

\end{abstract}

    \keywords{R{\'e}nyi divergences, Sandwiched R{\'e}nyi divergence, Reliability function, Gallager error exponent.}

\markboth{D. G. Bussandri, G. Rajchel-Mieldzio{\'c}, P. W. Lamberti, K. {\.Z}yczkowski}
{R{\'e}nyi-Holevo inequality from $\alpha$-$z$-R{\'e}nyi relative entropies}


 \vskip 0.5 cm
 {\sl Dedicated to Alexander S. Holevo on the occasion of his 80-th birthday.}

\section{Introduction}\label{sec:intro}

Quantum information theory explores the fundamental limits and possibilities of transmitting and processing information using quantum systems~\cite{Wilde2017a}. The Holevo Theorem\cite{Holevo1973} is a fundamental result in this field that provides insights into the transmission of classical information through a quantum communication setup. In Ref. \refcite{bussandri2020generalized}, the \textit{generalized Holevo theorem} was introduced, which rewrites the one-letter Holevo inequality in terms of quantum distance measures. Current contribution focuses on exploring the implications of the generalized Holevo theorem and extends its findings using quantum R{\'e}nyi divergences\cite{Audenaert2015}.

In this context, the sender -- Alice -- aims to transmit the value of a discrete random variable $X$ to the receiver -- Bob -- using quantum states. The communication is achieved by encoding the classical outcomes of $X$ into a specific quantum state, which is then transmitted to Bob. 
Finally, Bob retrieves the encoded classical information by measuring suitable observables of the quantum system. The generalized Holevo theorem\cite{bussandri2019generalized} establishes a quantum bound for an arbitrary
 (yet suitable, see Theorem \ref{theo:Bussandri2020}) 
 classical distance measures between the joint probability distribution of $X$ (the classical message) and $Y$, standing for the random variable defined by Bob's measurements, 
 and the corresponding separable distribution.

In this work, we introduce the concept of R{\'e}nyi divergences and its extensions to the quantum state space: the R{\'e}nyi relative entropy\cite{Petz1986}, the sandwiched R{\'e}nyi relative entropy\cite{Wilde2014} and 
the $\alpha$-$z$-R{\'e}nyi relative entropies\cite{Audenaert2015}. By exploiting its properties, we obtain the main result of the paper which we shall refer to as the \textit{Holevo-R{\'e}nyi inequality}, leading to a bound for the $\alpha$-\textit{mutual information}\cite{sibson1969information}. This result provides new insights into the performance of communication channels and helps to deepen the understanding of the fundamental limits of classical information transmission through quantum means\cite{burnashev1998reliability} by establishing new bounds for the Gallager's reliability function for memoryless multi-letter communication channels --
see for instance Ref. \refcite{Takeoka2010}.
 This function allows for a finer measure of the quality of the channel because it provides a lower bound for the largest possible rate of decay of the error probability~\cite{Alsan2015}.


The paper is organized as follows: In Sec. \ref{sec:motivation} we recall the
 generalized Holevo theorem. Sec. \ref{sec:RényiDist} contains the main results and definitions regarding quantum extensions of the $\alpha$-R{\'e}nyi divergence. In Sec. \ref{sec:RHB} we establish the Holevo-R{\'e}nyi inequality providing its tightest bounds. Finally,  in Sec. \ref{sec:App} we obtain bounds for the reliability functions and error exponents in the multi-letter memoryless case.

\section{Generalized Holevo Theorem}
\label{sec:motivation}

Consider a discrete variable $X$ taking on values from alphabet $\mathcal{X}$, randomly distributed according to the probability distribution $p(x)$, where $x$ is a letter in alphabet $\mathcal{X}$. In a quantum communication setup, the values of $X$ are transmitted using a set of quantum states (also known as \textit{signals}), $\{\rho_x\}_{x\in\mathcal{X}}$, belonging to $B(\mathcal{H})$: the semi-definite positive operators with unit trace defined on a finite-dimensional Hilbert space $\mathcal{H}$. These quantum signals are prepared utilizing macroscopic devices~\cite{Holevo2019}: By adjusting their parameters one can modify the quantum state, providing the possibility to encode a classical outcome $X=x$ into a specific quantum state $\rho_x$, which is sent to the receiver Bob. The correspondence $x \to \rho_x$ defines a classical-quantum channel.

To retrieve the encoded classical information from the output states, Bob performs a POVM (positive operator-valued measure) $\mathcal{M}$ given by positive operators $\{M_y\}_{y\in\mathcal{Y}}$. We shall refer to the random variable defined by the measurement results as $Y$, taking values from the alphabet $\mathcal{Y}$. The conditional probability of obtaining the result $Y=y$, given the input signal $\rho_x$, is 
\begin{align}
    W^M_{sq}(y|x)\doteq \Trr{\rho_xM_y}. \label{eq:conditionalprob-channel}
\end{align}
The joint probability distribution, and the corresponding \textit{separable} distribution, between $X$ and $Y$ are, respectively,
\begin{align}
    p_M(x,y)&\doteq p(x)W^M_{sq}(y|x),\label{eq:jointprob}\\
    (p\times q)(x,y)&\doteq p(x)q_M(y), \label{eq:separableprobdistr}
\end{align}
with $q_M(y)=\sum_{x\in\mathcal{X}} p_M(x,y)$. 

The following theorem, established in Ref.~\refcite{bussandri2020generalized}, provides a bound for the \textit{distance} between the joint probability $P$ and separable $p\times q$:
\begin{theorem}\label{theo:Bussandri2020}
	Let $d(\cdot, \cdot)$ be a monotone distance measure~\footnote{A functional $d(\cdot,\cdot)$ defined over $B(\mathcal{H})$ -- the set of density operators on the finite-dimensional Hilbert space $\mathcal{H}$, is a monotone distance measure if $d(\rho,\sigma)\geq 0$ for all $\rho$ and $\sigma$ belonging to $B(\mathcal{H})$ and $d(\rho,\sigma)\geq d[\Phi(\rho),\Phi(\sigma)]$, with $\Phi$ an arbitrary completely positive trace-preserving map on $B(\mathcal{H})$.} between quantum states which additionally satisfies the flag condition \footnote{The flag condition is:	$d(\rho_1,\rho_2)=\sum_k p_k d(\sigma_k,\tau_k)$, with $\rho_1$ and $\rho_2$ being two block-diagonal (or classical-quantum) quantum states such that $\rho_1=\sum_k p_k \ket{k}\bra{k}\otimes\sigma_k$ and $\rho_2=\sum_k p_k \ket{k}\bra{k}\otimes\tau_k$.  The set $\{\ket{k}\bra{k}\}_k$ is an orthonormal base of the corresponding Hilbert space.},
	then holds the following
	\ba{D(P,p\times q)\leq \sum_{x\in\mathcal{X}} p(x) d(\rho_x,\av{\rho}),\label{eq:GenHolIneq}}
where $D(\cdot||\cdot)$ denotes the classical analogue of $d(\cdot,\cdot)$, in the space of probability distributions, and $\av{\rho}=\sum_{x\in\mathcal{X}}p(x)\rho_x$.
\end{theorem}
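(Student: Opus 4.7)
The plan is to mimic the classical proof of Holevo's bound via data processing, but with the generic monotone distance $d$ replacing the relative entropy. The key observation is that both sides of \eqref{eq:GenHolIneq} can be realized as values of $d$ on two carefully chosen classical--quantum states that differ by a single CPTP map, namely Bob's measurement acting on the quantum register.

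First I would introduce the two classical--quantum states on $\mathcal{H}_{XA}=\mathcal{H}_X\otimes\mathcal{H}$, with $\{\ket{x}\}_{x\in\mathcal{X}}$ an orthonormal basis of $\mathcal{H}_X$, defined by
\ba{
\rho_1 \;=\; \sum_{x\in\mathcal{X}} p(x)\,\ket{x}\bra{x}\otimes\rho_x,
\qquad
\rho_2 \;=\; \sum_{x\in\mathcal{X}} p(x)\,\ket{x}\bra{x}\otimes\av{\rho}.
}
By the flag condition applied with $\sigma_x=\rho_x$ and $\tau_x=\av{\rho}$, one immediately obtains $d(\rho_1,\rho_2)=\sum_{x\in\mathcal{X}} p(x)\,d(\rho_x,\av{\rho})$, which reproduces the right-hand side of \eqref{eq:GenHolIneq}.

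Next I would define the CPTP map $\Phi = \mathrm{id}_X\otimes\mathcal{M}$ where $\mathcal{M}$ is the measurement channel associated with the POVM $\{M_y\}_{y\in\mathcal{Y}}$, i.e.\ $\mathcal{M}(\sigma)=\sum_{y\in\mathcal{Y}}\Trr{\sigma M_y}\,\ket{y}\bra{y}$. A direct computation using \eqref{eq:conditionalprob-channel}--\eqref{eq:separableprobdistr} gives
\ba{
\Phi(\rho_1) &= \sum_{x,y} p(x)\,W^M_{sq}(y|x)\,\ket{x}\bra{x}\otimes\ket{y}\bra{y}, \\
\Phi(\rho_2) &= \sum_{x,y} p(x)\,q_M(y)\,\ket{x}\bra{x}\otimes\ket{y}\bra{y},
}
so that $\Phi(\rho_1)$ and $\Phi(\rho_2)$ are simultaneously diagonal in the product basis $\{\ket{x}\otimes\ket{y}\}$, with eigenvalues encoding $P$ and $p\times q$ respectively. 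A second invocation of the flag condition (now with the diagonal blocks being classical) shows that $d\big(\Phi(\rho_1),\Phi(\rho_2)\big)$ reduces to the classical distance $D(P,p\times q)$.

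Combining these observations with the monotonicity of $d$ under the CPTP map $\Phi$ yields
\ba{
\sum_{x\in\mathcal{X}} p(x)\,d(\rho_x,\av{\rho}) \;=\; d(\rho_1,\rho_2) \;\geq\; d\big(\Phi(\rho_1),\Phi(\rho_2)\big) \;=\; D(P,p\times q),
}
which is exactly the claimed inequality. The only subtle step is checking that the flag condition truly applies to reduce $d\big(\Phi(\rho_1),\Phi(\rho_2)\big)$ to the classical $D$; this is the part I expect to need the most care, because one must verify that the classical analogue $D$ is indeed the restriction of $d$ to commuting diagonal states, which the flag hypothesis guarantees by iterating the decomposition over the joint index $(x,y)$.
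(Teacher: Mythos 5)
Your argument is correct and is essentially the canonical proof of this result: the paper itself only quotes Theorem~\ref{theo:Bussandri2020} from Ref.~\refcite{bussandri2020generalized} without reproducing a proof, and the proof given there proceeds exactly as you do, via the flag condition on the classical--quantum states $\sum_x p(x)\ket{x}\bra{x}\otimes\rho_x$ and $\sum_x p(x)\ket{x}\bra{x}\otimes\av{\rho}$ followed by monotonicity under the measurement channel. Your closing caveat is handled by the convention that $D$ \emph{is defined} as the restriction of $d$ to commuting (diagonal) states, so once $\Phi(\rho_1)$ and $\Phi(\rho_2)$ are simultaneously diagonal with spectra $P$ and $p\times q$, the identification $d\big(\Phi(\rho_1),\Phi(\rho_2)\big)=D(P,p\times q)$ is immediate and no further use of the flag condition is needed.
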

The distance measures $d(\cdot,\cdot)$ provide alternative ways of quantifying different aspects of the distinguishability of quantum states in the same way as the previous theorem establishes bounds for a variety of quantifiers that characterize the transmission of classical information by using the ensemble of quantum states $\{p(x),\rho_x\}_{x\in\mathcal{X}}$. 
The most famous case is given by the quantum relative entropy $S_r(\rho,\sigma)=\Trr{\rho (\log_2\rho-\log_2\sigma)}$ of Umegaki~\cite{Umegaki_1962}, which reduces to the Shannon relative (or Kullback-Leibler-Sanov~\cite{Holevo2019}) entropy $H_r$, if $\rho$ and $\sigma$ commute.
The generalized Holevo bound reduces to the standard form for $d=S_r$, 
which is a fundamental result in the field of quantum information theory establishing an upper bound for the quantity of classical information that can be transmitted by encoding it into an ensemble of quantum states. 

For completeness, let us recall the, now standard, Holevo's bound~\cite{Holevo1973}.
\begin{theorem}[Holevo's bound\cite{Holevo1973}] The accessible information $I_a(p)$ can be bounded as:
\ba{\label{eq:Holevoinfo}
I_a(p)\doteq\max_{\mathcal{M}} I(X,Y)\leq \sum_{x\in\mathcal{X}} p(x) S_r(\rho_x||\av{\rho})\doteq C(p),} \label{theo:Holevo}
where
\ba{\label{eq:mutualinfo}
I(X,Y)\doteq H(X)+H(Y)-H(X,Y),
}
with $H(X)\doteq -\sum_{x\in \mathcal{X}} p(x) \log_2 p(x)$ and $H(Y)\doteq -\sum_{y\in\mathcal{Y}} q_M(y) \log_2 q_M(y)$ the Shannon's entropies of $X$ and $Y$, respectively, and $H(X,Y)\doteq -\sum_{x\in \mathcal{X}} p_M(x,y) \log_2 p_M(x,y)$ the joint entropy.
\end{theorem}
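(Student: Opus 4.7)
The plan is to derive this standard form of the Holevo bound as an immediate corollary of Theorem~\ref{theo:Bussandri2020} by specializing the monotone distance measure $d(\cdot,\cdot)$ to the Umegaki relative entropy $S_r(\cdot,\cdot)$, whose classical counterpart is the Kullback--Leibler divergence $H_r$. This requires two preliminary observations: first, that $S_r$ satisfies the hypotheses of Theorem~\ref{theo:Bussandri2020}, namely monotonicity under completely positive trace-preserving maps (the Lindblad--Uhlmann data-processing inequality) and the flag condition (which follows from the direct-sum structure of $S_r$ on block-diagonal states); and second, that $H_r(p_M, p\times q_M)$ coincides with the classical mutual information $I(X,Y)$ of Eq.~(\ref{eq:mutualinfo}), an identity verified directly by expanding $\log_2\bigl[p_M(x,y)/(p(x)q_M(y))\bigr]$ and reorganizing the resulting sums as $H(X)+H(Y)-H(X,Y)$.

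Once these observations are in place, Theorem~\ref{theo:Bussandri2020} applied to the joint and separable distributions defined in Eqs.~(\ref{eq:jointprob}) and (\ref{eq:separableprobdistr}) yields, for every POVM $\mathcal{M}$,
\begin{align*}
I(X,Y) \;=\; H_r(p_M, p\times q_M) \;\leq\; \sum_{x\in\mathcal{X}} p(x)\, S_r(\rho_x,\av{\rho}) \;=\; C(p).
\end{align*}
Since the right-hand side depends only on the encoding ensemble $\{p(x),\rho_x\}_{x\in\mathcal{X}}$ and not on the measurement, the final step is simply to take the maximum over all POVMs on the left-hand side, which leaves the right-hand side untouched and produces the desired inequality $I_a(p)=\max_{\mathcal{M}} I(X,Y)\leq C(p)$.

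The main work, and hence the main potential obstacle, sits in the two properties required of $S_r$. The flag condition is elementary from the block-diagonal structure of the state and of $\log_2$ restricted to the corresponding invariant subspaces. The quantum data-processing inequality for $S_r$ is non-trivial but is a well-established theorem in quantum information theory, which I would simply invoke. In this sense, the proof is a clean specialization: the heavy lifting is entirely absorbed into the generalized Holevo theorem, and no new ingredient beyond the identification $H_r(p_M,p\times q_M)=I(X,Y)$ and the final maximization over $\mathcal{M}$ is needed.
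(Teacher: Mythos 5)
Your proposal is correct and follows essentially the same route the paper takes: the paper states Holevo's bound as a cited classical result and remarks just above it that ``the generalized Holevo bound reduces to the standard form for $d=S_r$,'' which is exactly the specialization you carry out (monotonicity and the flag condition for $S_r$, the identification $H_r(p_M,p\times q_M)=I(X,Y)$, and the final maximization over POVMs). Your write-up simply makes explicit the verification that the paper leaves implicit.
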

The \textit{accessible information} $I_a(p)$ is the maximal amount of information which can be sent by using a quantum encoding, namely, it stands for the maximum information that Bob can obtain by performing measurements in the one-shot case. The upper bound $C(p)$ is known as \textit{Holevo information }and it can be rewritten in the following way,
\begin{align}
C(p) = \sum_{x\in\mathcal{X}} p(x) S_r(\rho_x||\av{\rho}) = S(\av{\rho})-\sum_{x\in\mathcal{X}}p(x)S(\rho_x),
\end{align}
where $S(\rho)=-\Trr{\rho \log_2 \rho}$ is the von Neumann entropy.

\section{Classical and quantum R\'enyi divergences}\label{sec:RényiDist}
	
\noindent The $\alpha$-R\'enyi divergence between two probability distributions $p$ and $q$ 
on a finite alphabet $ \mathcal{X}$ is defined as\cite{VanErven2014}
\ba{\label{eq:ClassicalRenyiDiv}
    D_\alpha(p||q)\doteq\frac{1}{\alpha-1}\log\sum_{x\in \mathcal{X}} p(x)^\alpha q(x)^{1-\alpha},
}
with R{\'e}nyi parameter $\alpha>0$. There is a natural one-parameter extension of $D_\alpha$ to the quantum state space, which we refer to as the R{\'e}nyi relative entropy~\cite{Petz1986} (RRE):
	\ba{
		d_\alpha(\rho||\sigma)\doteq\frac{1}{\alpha-1}\log \Trr{\rho^\alpha \sigma^{1-\alpha}},
	\label{eq:RenyiRelativeEntropy}}
for $\rho,\sigma \in  B(\mathcal{H})$ such that $\textrm{supp}  \ \rho \subseteq \textrm{supp} \ \sigma$ (i.e. $\sigma \! \gg \! \rho$), or $\rho \not\perp \sigma$. Here, parameter $\alpha\in(0,1)$. 
We define a set of pairs of quantum states that meet the above conditions by symbol $\mathbb{B}$.

The sandwiched R{\'e}nyi relative entropy represents a different non-commutative extension of the classical $\alpha$-R\'enyi divergence~\cite{Wilde2014}:
	\ba{\label{eq:RenyiSandwiched}
		\tilde{d_\alpha}(\rho||\sigma)\doteq \frac{1}{\alpha-1}\log \Trr{\left(\sigma^{\frac{1-\alpha}{2\alpha}}\rho \sigma^\frac{1-\alpha}{2\alpha}\right)^\alpha},
	}  
for $(\rho,\sigma)\in\mathbb{B}$. 
The functionals defined by Eqs. \eqref{eq:RenyiRelativeEntropy} and \eqref{eq:RenyiSandwiched} are both useful in different areas of quantum information theory \cite{mosonyi2009generalized,Lin2015}. The term \textit{non-commutative extensions} refers to the fact that if $\rho$ and $\sigma$ commute, both R{\'e}nyi relative entropy and the sandwiched R{\'e}nyi relative entropy reduce to Eq. \eqref{eq:ClassicalRenyiDiv}. 
Furthermore, they connect to another important notion, namely the Uhlmann fidelity~\cite{Uhlmann1976,Jozsa1994},
$F(\rho,\sigma) = (\Trr{\sqrt{\sigma^{1/2} \rho \sigma^{1/2}}})^2$. 
More specifically, for $\alpha=1/2$, Eq. \eqref{eq:RenyiSandwiched} takes the form of $d_{1/2}(\rho||\sigma)= - \log F$.
The right-hand side of the latter equation, called log-fidelity, forms a semimetric, 
as it does not obey the triangle inequality. Observe that
by varying the parameter $\alpha$ from $1/2$ to $1$, the quantity $d_{\alpha}$ 
provides a continuous transition between the log fidelity (symmetric in both arguments) and the quantum relative entropy of Umegaki~\cite{Umegaki1962}, so the permutation symmetry breaks down with the increase of $\alpha$. 
	
In Ref. \refcite{Audenaert2015}, a new family of quantum R\'enyi divergences, the $\alpha$-$z$-R\'enyi relative entropies, was introduced as follows:
 	\ba{
		d_{\alpha,z}(\rho||\sigma)&\doteq \frac{1}{\alpha-1} \log f_{\alpha,z}(\rho||\sigma), \hspace{.5cm} \textrm{where} \nonumber \\\label{eq:alphaZrenyi}
		f_{\alpha,z}(\rho||\sigma)&\doteq\Trr{(\rho^{\alpha/2z}\sigma^{(1-\alpha)/z}\rho^{\alpha/2z})^z}, \text{ with } \rho,\sigma\in\mathbb{B}.
	}
Note the connection linking it to both R\'enyi relative entropy and the sandwiched quantum R\'enyi divergence: $d_{\alpha,1}=d_\alpha$ and $d_{\alpha,\alpha}=\tilde{d_\alpha}$. We shall refer to $f_{\alpha,z}(\rho||\sigma)$ as the $\alpha$-$z$-R\'enyi overlap, in an analogy to a similar notation used from Ref.~\refcite{Fuchs1996a}.

The following lemma states the conditions under which the data processing inequality holds for the $\alpha$-$z$-R\'enyi relative entropies:	
	\begin{lemma}[Data processing inequality for $\alpha$-$z$-RRE]\label{theo:Monotonicity}
Let $\rho,\sigma\in \mathbb{B}$ be two arbitrary \textit{invertible} density matrices (i.e. positive definite), and $\Phi: B(\mathcal{H})\to  B(\mathcal{H})$ be an arbitrary completely positive trace-preserving quantum map. It follows that
 \ba{d_{\alpha,z}[\Phi(\rho)||\Phi(\sigma)]\leq d_{\alpha,z}(\rho||\sigma),\label{eq:dpi}}
if and only if one of the following conditions is met~\cite{Zhang2020a,Zhang2020}
\begin{enumerate}
\item $\alpha<1$ and $z\geq \max \{\alpha, 1-\alpha\}$,\label{eq:ineq1}
\item $1<\alpha\leq 2$, and $\alpha/2\leq z \leq \alpha$, 
\item $2\leq\alpha<\infty$ and $\alpha-1\leq z \leq \alpha$.\label{eq:ineq3}
 \end{enumerate}
\end{lemma}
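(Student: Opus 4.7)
\emph{Proof plan.} The plan is to reduce the data processing inequality \eqref{eq:dpi} to a joint concavity/convexity property of the $\alpha$-$z$-R\'enyi overlap $f_{\alpha,z}(\rho||\sigma)$, and then invoke the classification of when that property holds in terms of $(\alpha,z)$. By a Stinespring dilation, DPI for an arbitrary CPTP map $\Phi$ reduces to DPI under partial traces, and by a pinching/``flag'' argument DPI under partial traces is in turn equivalent to the statement that $(\rho,\sigma)\mapsto f_{\alpha,z}(\rho||\sigma)$ is jointly concave when $\alpha<1$ and jointly convex when $\alpha>1$; the direction is swapped consistently by the sign of the prefactor $1/(\alpha-1)$ together with the monotonicity of $\log$. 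The reduction itself is checked by a direct computation for block-diagonal arguments $\rho_1=\sum_k p_k \ket{k}\bra{k}\otimes\rho_k$ and $\rho_2=\sum_k p_k \ket{k}\bra{k}\otimes\sigma_k$, which gives $f_{\alpha,z}(\rho_1||\rho_2)=\sum_k p_k\, f_{\alpha,z}(\rho_k||\sigma_k)$, to be compared with $f_{\alpha,z}$ evaluated on the partial traces $\sum_k p_k\rho_k$ and $\sum_k p_k\sigma_k$.

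For the sufficiency direction I would invoke the operator-analytic toolbox available for this family. In the regime $\alpha<1$, $z\geq\max\{\alpha,1-\alpha\}$, joint concavity of $(\rho,\sigma)\mapsto\Trr{(\rho^{\alpha/2z}\sigma^{(1-\alpha)/z}\rho^{\alpha/2z})^z}$ follows from Epstein's concavity theorem combined with the operator concavity of $t\mapsto t^{\alpha/z}$ and $t\mapsto t^{(1-\alpha)/z}$, each valid precisely when the corresponding exponent lies in $[0,1]$, i.e.\ when $z\geq\max\{\alpha,1-\alpha\}$. For $\alpha>1$ the analogous joint convexity is obtained using operator convexity of $t^p$ on $p\in[-1,0]\cup[1,2]$: substituting $p=\alpha/z$ yields $\alpha/2\leq z\leq\alpha$, while $p=(1-\alpha)/z$ yields $z\geq\alpha-1$, and intersecting these constraints reproduces the windows $1<\alpha\leq 2,\ \alpha/2\leq z\leq\alpha$ and $\alpha\geq 2,\ \alpha-1\leq z\leq\alpha$. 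Specializing to $z=1$ and $z=\alpha$ one recovers the classical Hiai result for $d_\alpha$ and the sandwiched DPI for $\tilde d_\alpha$.

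The hardest step is the necessity direction: whenever $(\alpha,z)$ falls outside the three listed regions, one must exhibit explicit states and a CPTP map violating \eqref{eq:dpi}. Following Refs.~\refcite{Zhang2020a,Zhang2020}, I would perturb the maximally mixed state on a bipartite Hilbert space, $\rho_\varepsilon=\mathbb{I}/d+\varepsilon A$ and $\sigma_\varepsilon=\mathbb{I}/d+\varepsilon B$ with traceless Hermitian $A,B$, expand both $d_{\alpha,z}(\rho_\varepsilon||\sigma_\varepsilon)$ and $d_{\alpha,z}(\TrP{2}{\rho_\varepsilon}||\TrP{2}{\sigma_\varepsilon})$ to second order in $\varepsilon$, and read off a family of operator inequalities whose joint feasibility cuts out exactly the three admissible regions. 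The delicate window is $1<\alpha<2$, where the lower bound $\alpha/2\leq z$ and the upper bound $z\leq\alpha$ trace back to genuinely different convexity mechanisms -- respectively the outer power $t^z$ and the inner sandwich $\sigma^{(1-\alpha)/z}$ -- so that a single one-parameter perturbation cannot witness both failures simultaneously. Producing the explicit block-matrix counterexamples that rule out the remaining boundary regions is what I expect to be the main obstacle of a complete self-contained proof.
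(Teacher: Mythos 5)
First, note that the paper does not prove this lemma at all: it is imported verbatim from Refs.~\refcite{Zhang2020a} and \refcite{Zhang2020}, so there is no internal proof to compare yours against. Judged on its own terms, your plan gets the architecture right -- the reduction of DPI for general CPTP maps to DPI under partial traces via Stinespring, and the equivalence of the latter with joint concavity ($\alpha<1$) or joint convexity ($\alpha>1$) of the overlap $f_{\alpha,z}$ via the block-diagonal (flag) computation, is indeed the standard route and is how the cited works proceed.

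However, the sufficiency step contains a genuine gap. Joint concavity of $(\rho,\sigma)\mapsto\Trr{(\rho^{\alpha/2z}\sigma^{(1-\alpha)/z}\rho^{\alpha/2z})^{z}}$ on the region $\alpha<1$, $z\geq\max\{\alpha,1-\alpha\}$ does \emph{not} follow from ``Epstein's theorem combined with operator concavity of $t\mapsto t^{\alpha/z}$ and $t\mapsto t^{(1-\alpha)/z}$'': composing operator-concave maps inside a trace power does not preserve joint concavity of the resulting two-variable functional, and the fact that the exponent windows you read off from operator convexity of $t^{p}$ happen to coincide numerically with the admissible $(\alpha,z)$ regions is not a proof. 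This was precisely the content of the Audenaert--Datta conjecture, open for several years and settled by Zhang only via a new variational characterization of trace functionals of the form $\mathrm{Tr}\,(B^{q/2}A^{p}B^{q/2})^{s}$ (writing them as optimizations over auxiliary operators so that joint concavity/convexity reduces to known one- and two-variable results of Lieb--Epstein type). Your sketch omits this engine entirely. The necessity direction is likewise left open: a second-order perturbation about the maximally mixed state yields necessary conditions but does not by itself carve out exactly the three regions, and the required counterexamples on the boundary (which you correctly identify as the main obstacle) are the other substantial ingredient of the cited proofs. As written, the proposal establishes neither direction of the ``if and only if.''
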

We shall refer to the set of all pairs $(\alpha,z)\in\mathbb{R}^2$ fulfilling the inequalities \eqref{eq:ineq1}-\eqref{eq:ineq3} as $\mathbb{M}$, see Fig. \ref{fig:DPIvalues} for a graphical visualization. 

\begin{figure}
    \centering
    \includegraphics[width=0.7\textwidth]{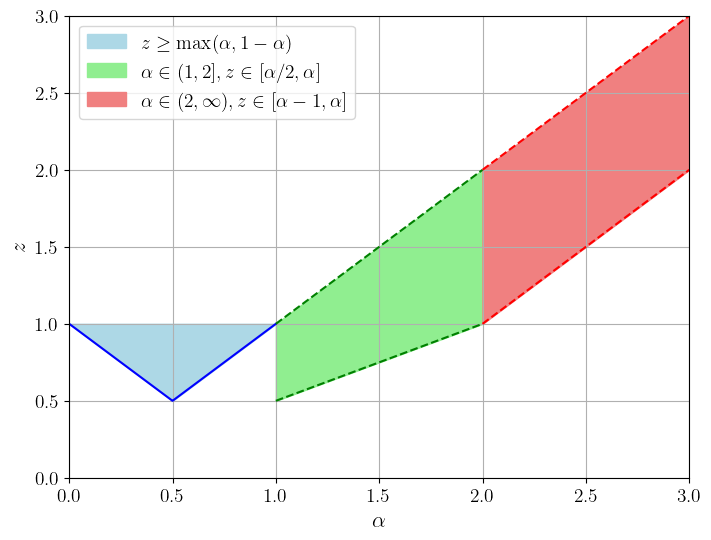}
    \caption{Region plot of the parameter space
     $\mathbb{M}$: Set of all pairs $(\alpha,z)\in\mathbb{R}^2$ satisfying the data processing inequalities, Eq. \eqref{eq:dpi}.}
    \label{fig:DPIvalues}
\end{figure}

The order of the $\alpha$-$z$-R{\'e}nyi relative entropies with respect to parameter $z$ is ruled by the following Lemma, demonstrated in Ref. \refcite{Lin2015}: 

\begin{lemma}\label{theo:Ordering}
For arbitrary quantum states $(\rho,\sigma)\in \mathbb{B}$, the function defined by $$z \longmapsto d_{\alpha,z}(\rho||\sigma),$$ is monotonically decreasing for $\alpha>1$ and monotonically increasing for $\alpha<1$.
\end{lemma}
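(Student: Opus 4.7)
The plan is to reduce the statement to a single monotonicity property of the $\alpha$-$z$-R\'enyi overlap $f_{\alpha,z}$ and then invoke the Araki--Lieb--Thirring (ALT) inequality. Since $d_{\alpha,z}(\rho||\sigma)=(\alpha-1)^{-1}\log f_{\alpha,z}(\rho||\sigma)$, and the prefactor $1/(\alpha-1)$ is positive for $\alpha>1$ and negative for $\alpha<1$, it suffices to prove that the map $z\mapsto f_{\alpha,z}(\rho||\sigma)$ is monotonically \emph{decreasing} on $(0,\infty)$: the logarithm preserves this direction, and multiplying by $1/(\alpha-1)$ then yields both regimes asserted in the lemma without further work.

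To obtain the decrease of $f_{\alpha,z}$, fix $0<z_1\le z_2$ and set $X\doteq \rho^{\alpha/z_2}$ and $Y\doteq \sigma^{(1-\alpha)/z_2}$, both positive operators (and well-defined since we may assume $\rho,\sigma$ positive definite on the common support, with the degenerate case reached by continuity). Put $r\doteq z_2/z_1\ge 1$ and $q\doteq z_1$, so that $rq=z_2$, $X^{r/2}=\rho^{\alpha/(2z_1)}$, and $Y^r=\sigma^{(1-\alpha)/z_1}$. The ALT inequality in the form
\baa{\Trr{(X^{1/2} Y X^{1/2})^{rq}}\le \Trr{(X^{r/2} Y^{r} X^{r/2})^{q}}, \qquad X,Y\ge 0,\ q\ge 0,\ r\ge 1,}
then specializes, with these assignments, precisely to
\baa{f_{\alpha,z_2}(\rho||\sigma)=\Trr{\bigl(\rho^{\alpha/2z_2}\sigma^{(1-\alpha)/z_2}\rho^{\alpha/2z_2}\bigr)^{z_2}}\le \Trr{\bigl(\rho^{\alpha/2z_1}\sigma^{(1-\alpha)/z_1}\rho^{\alpha/2z_1}\bigr)^{z_1}}=f_{\alpha,z_1}(\rho||\sigma),}
which is the desired monotonicity.

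The main (and essentially only non-trivial) step is the choice of $X,Y,r,q$ so that a \emph{single} application of ALT simultaneously matches the exponents of both $\rho$ and $\sigma$ appearing in $f_{\alpha,z_1}$ and $f_{\alpha,z_2}$; once this bookkeeping is right, everything else is automatic. Two minor points are worth noting: first, the sign of $1-\alpha$ is immaterial, since $Y=\sigma^{(1-\alpha)/z_2}$ is a positive operator for any real exponent as long as $\sigma$ is positive definite, so ALT applies in both regimes $\alpha<1$ and $\alpha>1$ without modification; second, no derivative in $z$ needs to be computed, as the sign bookkeeping coming from $(\alpha-1)^{-1}$ cleanly produces the switch between decrease for $\alpha>1$ and increase for $\alpha<1$.
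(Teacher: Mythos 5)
Your proof is correct: the reduction to monotone decrease of $z\mapsto f_{\alpha,z}$, the choice $X=\rho^{\alpha/z_2}$, $Y=\sigma^{(1-\alpha)/z_2}$, $r=z_2/z_1$, $q=z_1$, and the single application of the Araki--Lieb--Thirring inequality all check out, and the sign bookkeeping through $(\alpha-1)^{-1}$ correctly produces both regimes. The paper itself gives no proof of this lemma --- it is quoted from Ref.~\refcite{Lin2015} --- and your ALT argument is essentially the standard one used there, so you have in effect supplied the omitted proof rather than found a different route; the only point worth tightening is the continuity/limiting argument for non-invertible states (needed when $\alpha>1$, where $\sigma$ appears with a negative exponent), which is fine under the support condition built into $\mathbb{B}$ but deserves an explicit sentence.
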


\section{Results: R\'enyi--Holevo inequalities} \label{sec:RHB}
	
Let us return to the quantum communication framework, explained in Sec. \ref{sec:motivation}, and focus on the $\alpha$-R\'enyi divergence between the joint and separable probability distributions given by Eqs. \eqref{eq:jointprob} and \eqref{eq:separableprobdistr}, respectively:
	\ba{
 D_\alpha(P||p\times q)=\frac{1}{\alpha-1}\log \sum_{x\in\mathcal{X},y\in\mathcal{Y}} p_M(x,y)^{\alpha} [p(x)q_M(y)]^{1-\alpha}. 
	}
If we maximize this quantity over the set of possible measurements, we will obtain a \textit{R\'enyi accessible information}, which coincides with the usual accessible information if we take the limit $\alpha \to 1$. 
	
Now, let us explore the quantum bounds for $D_\alpha(P||p\times q)$ resulting from the introduction of the $\alpha$-$z$-R{\'e}nyi relative entropy, Eq. \eqref{eq:alphaZrenyi}, in Theorem \ref{theo:Bussandri2020}. 
Therefore, we have to study whether monotonicity and the flag condition are satisfied for the $\alpha$-$z$-R{\'e}nyi relative entropies. 
Since, in general, the flag condition does not hold for $d_{\alpha,z}(\cdot||\cdot)$, we define instead an alternate distance measure, that will satisfy the flag condition for $(\alpha,z)\in \mathbb{M}$,
\begin{align}\label{eq:alternateDist}
d^A_{\alpha,z}(\rho,\sigma)\doteq\begin{cases}
f_{\alpha,z}(\rho||\sigma) -1 \text{, for } \alpha > 1, \\
1- f_{\alpha,z}(\rho||\sigma) \text{, for } \alpha < 1,
\end{cases}
\end{align}
so that the conditions of Theorem \ref{theo:Bussandri2020} are met. The following theorem provides bounds for the R{\'e}nyi accessible information in the single-letter case:
\begin{theorem}[Holevo-R\'enyi inequality]\label{theo:MAIN}
 Let $(\alpha,z)\in \mathbb{M}$ and $f_{\alpha,z}(\rho||\sigma)=\text{\emph{Tr}}\!\left[\sigma^{\frac{(1-\alpha)}{2z}}\rho^{\frac{\alpha}{z}}\sigma^{\frac{(1-\alpha)}{2z}}\right]^z$. Then the following holds,
\begin{align}\label{eq:MainTheo}
\sum_{x\in\mathcal{X},y\in\mathcal{Y}} p_M(x,y)^\alpha [p(x)q_M(y)]^{1-\alpha} \begin{cases}
\leq \sum_x p(x) f_{\alpha,z}(\rho_x||\av{\rho}) \text{, for } \alpha > 1, \\
\geq \sum_x p(x) f_{\alpha,z}(\rho_x||\av{\rho}) \text{, for } \alpha < 1.
\end{cases}
\end{align}
with $\av{\rho}=\sum_{x\in\mathcal{X}}p(x)\rho_x$.

	\end{theorem}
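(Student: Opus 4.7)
My plan is to apply the generalized Holevo Theorem~\ref{theo:Bussandri2020} with the alternate distance $d^A_{\alpha,z}$ from Eq.~\eqref{eq:alternateDist}, then translate the resulting bound into a statement about the R\'enyi overlap. To do so I need to verify that, on the allowed region $(\alpha,z)\in\mathbb{M}$, $d^A_{\alpha,z}$ is (i) non-negative, (ii) monotone under CPTP maps, and (iii) satisfies the flag condition, and finally to identify the classical analogue of $d^A_{\alpha,z}$ on commuting pairs.

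Properties (i) and (ii) come essentially for free from Lemma~\ref{theo:Monotonicity}. Writing $d_{\alpha,z}=\frac{1}{\alpha-1}\log f_{\alpha,z}$, non-negativity of $d_{\alpha,z}$ forces $f_{\alpha,z}\geq 1$ for $\alpha>1$ and $f_{\alpha,z}\leq 1$ for $\alpha<1$, so in both cases $d^A_{\alpha,z}\geq 0$ by construction. Similarly, the DPI for $d_{\alpha,z}$ combined with the sign of $\alpha-1$ shows that $f_{\alpha,z}$ is non-increasing under $\Phi$ for $\alpha>1$ and non-decreasing for $\alpha<1$; the two cases in Eq.~\eqref{eq:alternateDist} have been precisely designed so that $d^A_{\alpha,z}$ is non-increasing under $\Phi$ uniformly in $\alpha$.

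The flag condition is the main technical step. For block-diagonal states $\rho_1=\sum_k p_k\ket{k}\bra{k}\otimes\sigma_k$ and $\rho_2=\sum_k p_k\ket{k}\bra{k}\otimes\tau_k$, fractional powers act block-wise. The exponents $\alpha/2z$ and $(1-\alpha)/z$ appearing in the definition of $f_{\alpha,z}$ combine so that the scalar weight in each block becomes $p_k^{1/z}$ before the outer $z$-th power, and taking the trace yields $f_{\alpha,z}(\rho_1\|\rho_2)=\sum_k p_k\,f_{\alpha,z}(\sigma_k\|\tau_k)$. Because $\sum_k p_k=1$, the additive constants $\pm 1$ in $d^A_{\alpha,z}$ distribute correctly across the sum, producing the flag identity $d^A_{\alpha,z}(\rho_1,\rho_2)=\sum_k p_k\,d^A_{\alpha,z}(\sigma_k,\tau_k)$.

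With these three ingredients in hand, Theorem~\ref{theo:Bussandri2020} applied to $P=p_M$ and $p\times q$ gives $D^A(P,p\times q)\leq \sum_x p(x)\,d^A_{\alpha,z}(\rho_x,\av{\rho})$, where $D^A$ is the classical analogue of $d^A_{\alpha,z}$. On commuting arguments $f_{\alpha,z}$ loses its $z$-dependence and collapses to the classical R\'enyi overlap $\sum_x p(x)^\alpha q(x)^{1-\alpha}$, so the $\pm 1$ offsets cancel on both sides of the inequality and only the overall sign survives, flipping the inequality at $\alpha=1$ to recover Eq.~\eqref{eq:MainTheo}. The principal obstacle I anticipate is the flag-condition calculation, since it relies on the particular balance of the exponents $\alpha/2z$ and $(1-\alpha)/z$ that allows clean factorisation by $p_k^{1/z}$; every other step is essentially bookkeeping around the sign of $\alpha-1$.
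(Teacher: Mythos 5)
Your proposal is correct and follows essentially the same route as the paper: verify that $d^A_{\alpha,z}$ is a non-negative, CPTP-monotone distance satisfying the flag condition (your block-wise computation showing the weights combine to $p_k^{1/z}$ before the outer $z$-th power is exactly the check the paper sketches via $\rho_{cq}^{\lambda}=\sum_k p_k^{\lambda}\ket{k}\bra{k}\otimes\rho_k^{\lambda}$), then feed it into Theorem~\ref{theo:Bussandri2020} and cancel the $\pm 1$ offsets. The only cosmetic difference is that you assume non-negativity of $d_{\alpha,z}$ to get $f_{\alpha,z}\gtrless 1$, whereas the paper derives positivity of $d^A_{\alpha,z}$ directly from the DPI applied to the completely depolarizing map together with $f_{\alpha,z}(\rho||\rho)=1$; these are equivalent ingredients.
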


 \begin{proof}
The proof is an immediate consequence of inserting $d^A_{\alpha,z}$ into Theorem \ref{theo:Bussandri2020} as a quantum distance measure, and since the monotonicity and the flag condition hold. 

The behavior of $\alpha$-$z$-R{\'e}nyi overlap $f_{\alpha,z}(\cdot||\cdot)$ under arbitrary completely positive trace-preserving map $\Phi$ is characterized by Lemma \ref{theo:Monotonicity}: For $(\alpha,z)\in \mathbb{M}$ and $\alpha<1$, it holds that $f_{\alpha,z}[\Phi(\rho)||\Phi(\sigma)]\geq f_{\alpha,z}(\rho||\sigma)$; while if $\alpha>1$, we have the opposite $f_{\alpha,z}[\Phi(\rho)||\Phi(\sigma)]\leq f_{\alpha,z}(\rho||\sigma)$. 
The relation $f_{\alpha,z}(\rho||\rho)=1$ for all $(\alpha,z)\in\mathbb{M}$,
implies that $d^A_{\alpha,z}(\cdot,\cdot)$, defined in Eq. \eqref{eq:alternateDist}, is a proper monotone quantum distance since it satisfies the data processing inequality, 
and it is positive for all $\rho,\sigma \in \mathbb{B}$: 
$$d^A_{\alpha,z}(\rho,\sigma)\geq d^A_{\alpha,z}[\Phi_d(\rho),\Phi_d(\sigma)] = d^A_{\alpha,z}(\frac{\mathbb{I}}{\dim \mathcal{H}},\frac{\mathbb{I}}{\dim \mathcal{H}})=0,$$
with $\Phi_d(\cdot)$ being the completely depolarizing map that transforms any state $\rho$ to the maximally mixed one, $\Phi_d(\rho)=\mathbb{I}/\dim \mathcal{H}$. 
Here, the inequality holds because $d^A_{\alpha,z}$ fulfills the data processing inequality.  

The flag condition, necessary to be satisfied by $d^A_{\alpha,z}$ for complying with the assumptions of Theorem \ref{theo:Bussandri2020}, can be easily verified by checking:
$$\rho_{cq}^{\lambda}=\sum_k p_k^\lambda \ket{k}\bra{k}\otimes \rho_k^\lambda,$$
for $\lambda\in\mathbb{R}$ and $\rho_{cq}=\sum_k p_k \ket{k}\bra{k}\otimes \rho_k$ being an arbitrary classical-quantum state~\cite{Wilde2017a}.
Finally, using Theorem \ref{theo:Bussandri2020} for $d^A_{\alpha,z}$, we arrive at Eq. \eqref{eq:MainTheo}.
 \end{proof}

By taking logarithm of both sides of inequality \eqref{eq:MainTheo} and then applying the corresponding concavity, we
see that the $\alpha$-$z$-R{\'e}nyi relative entropy leads to a similar bound as the relative entropy $S_r(\cdot||\cdot)$ does in Holevo's theorem -- see Theorem \ref{theo:Holevo},
\begin{corollary} For $(\alpha,z)\in\mathbb{M}$ and $\alpha < 1$, the $\alpha$-$z$-R{\'e}nyi divergence can be bounded as
	\ba{ 
	D_\alpha(P||p\times q) \leq \sum_i p_i d_{\alpha,z}(\rho_i||\av{\rho}),
}
where the notation is analogous to that of Theorem~\ref{theo:Bussandri2020}.
\end{corollary}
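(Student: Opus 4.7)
The plan is to apply the map $t\mapsto \frac{1}{\alpha-1}\log t$ to both sides of the Holevo--R\'enyi inequality from Theorem \ref{theo:MAIN}, carefully tracking how the inequality flips, and then push the $p(x)$-average through the logarithm by Jensen's inequality. The key point to exploit is that both factors of the prefactor $\tfrac{1}{\alpha-1}$ (sign) and the logarithm (concavity) interact to give a clean convex function in the regime $\alpha<1$.

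First, I would start from the $\alpha<1$ branch of Theorem \ref{theo:MAIN}, which reads
\begin{equation*}
\sum_{x,y}p_M(x,y)^{\alpha}[p(x)q_M(y)]^{1-\alpha}\;\geq\;\sum_{x\in\mathcal{X}} p(x)\,f_{\alpha,z}(\rho_x||\av{\rho}).
\end{equation*}
Applying the logarithm (which is monotone) preserves the direction, but multiplying by the negative constant $\tfrac{1}{\alpha-1}<0$ reverses it, yielding
\begin{equation*}
D_\alpha(P||p\times q)\;\leq\;\frac{1}{\alpha-1}\log\!\left[\sum_{x\in\mathcal{X}} p(x)\,f_{\alpha,z}(\rho_x||\av{\rho})\right],
\end{equation*}
by the very definition \eqref{eq:ClassicalRenyiDiv} of $D_\alpha$.

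Next, I would observe that the function $\phi(t)=\tfrac{1}{\alpha-1}\log t$ is convex on $(0,\infty)$ for $\alpha<1$, since $\log$ is concave and we multiply it by a negative scalar. Jensen's inequality applied to the probability weights $\{p(x)\}_{x\in\mathcal{X}}$ then gives
\begin{equation*}
\frac{1}{\alpha-1}\log\!\left[\sum_{x\in\mathcal{X}} p(x)\,f_{\alpha,z}(\rho_x||\av{\rho})\right]\;\leq\;\sum_{x\in\mathcal{X}} p(x)\,\frac{1}{\alpha-1}\log f_{\alpha,z}(\rho_x||\av{\rho}),
\end{equation*}
and the right-hand side is, by definition \eqref{eq:alphaZrenyi}, equal to $\sum_{x}p(x)\,d_{\alpha,z}(\rho_x||\av{\rho})$. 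Chaining the two inequalities yields the claimed bound.

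The only real subtlety is bookkeeping of inequality directions: one must remember that (i) the $\alpha<1$ case of Theorem \ref{theo:MAIN} is a lower bound on the R\'enyi overlap sum, (ii) the prefactor $(\alpha-1)^{-1}$ is negative in this regime and hence flips the inequality once, and (iii) the same negative prefactor turns the concavity of $\log$ into convexity of $\phi$, so Jensen applies in the right direction. All the nontrivial analytic work --- monotonicity, flag condition, positivity of $d^A_{\alpha,z}$ --- has already been done in the proof of Theorem \ref{theo:MAIN}, so no further quantum-information input is needed; the corollary is purely a consequence of Jensen's inequality and sign tracking.
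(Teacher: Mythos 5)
Your proof is correct and follows essentially the same route as the paper, which justifies the corollary by ``taking logarithm of both sides of inequality \eqref{eq:MainTheo} and then applying the corresponding concavity.'' You simply make explicit the two sign flips (the negative prefactor $(\alpha-1)^{-1}$ reversing the inequality, and the convexity of $t\mapsto\frac{1}{\alpha-1}\log t$ for $\alpha<1$ enabling Jensen in the required direction), which is a faithful and slightly more careful rendering of the paper's one-line argument.
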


As we can see, the left-hand side in the Holevo-R{\'e}nyi inequality does not depend on $z$, therefore, by simply employing Lemma~\ref{theo:Ordering} we can verify the tightest bounds for $(\alpha,z)\in\mathbb{M}$.  For the rest of this proof, we will search for the optimal value of $z$ leading to these optimal bounds. 

From Lemma~\ref{theo:Ordering}, we obtain that $z \mapsto f_{\alpha,z}$ is monotonically decreasing with $z$ for all $\alpha>0$. If $\alpha <1$, we have to maximize $f_{\alpha,z}$ as a function of $z$, taking correspondingly the minimum $z$ such that $(\alpha,z)\in\mathbb{M}$ (see Fig. \ref{fig:DPIvalues}), that is,
$$z^*_{\alpha<1}=\begin{cases}
   \alpha, \text{ if } \alpha\in[1/2,1),\\
   1-\alpha, \text{ if } \alpha\in(0,1/2).
\end{cases}$$

If $\alpha>1$, we have to take the lowest value of $f_{\alpha,z}$, this corresponds to the largest value of $z$ such that $(\alpha,z)\in\mathbb{M}$, therefore, $z^*_{\alpha>1}=\alpha$, see Figure \ref{fig:DPIvalues}. In summary, the tightest bounds in Eq. \eqref{eq:MainTheo} are given by:
\begin{align}\label{eq:tightestbounds}
f_{sq}(p,\alpha)\doteq \begin{cases}
     \sum_x p(x) \Trr{\left(\rho_x\overline{\rho}^{\frac{1-\alpha}{\alpha}}\right)^\alpha} \text{ for } \alpha \geq 1/2, \\
\sum_x p(x)
\Trr{(\rho_x^{\frac{\alpha}{1-\alpha}}\overline{\rho})^{1-\alpha}} \text{ for } \alpha < 1/2.
\end{cases}
\end{align}
with
\begin{align}\label{eq:tightestbounds2}
    D_\alpha(P||p\times q) \leq \frac{1}{\alpha-1} \log f_{sq}(p,\alpha),
\end{align}
for all $\alpha\in(0,1)\cup(1,\infty)$.

Finally, the $\alpha$-$z$-R{\'e}nyi overlap corresponding to the sandwiched R\'enyi divergence $f_{\alpha,\alpha}$ leads to the tightest bound for all $(\alpha,z)\in\mathbb{M}$ except for $\alpha<1/2$, for which the optimal $z$ is given by $z=1-\alpha$. This case corresponds to the distance measure $d_{\alpha,1-\alpha}(\rho||\sigma)$, referred to as \textit{reverse sandwiched R\'enyi relative entropy}, see Ref. \refcite{Audenaert2015},
	\ba{
		d_{\alpha,1-\alpha}(\rho||\sigma)=\frac{\alpha}{(1-\alpha)}\tilde{d}_{1-\alpha}(\sigma||\rho).
	}

 \section{Applications}\label{sec:App}

In classical information theory, there are several complementary proposals to the mutual information $I(X,Y)$, which represents the standard quantifier of correlation degree between two random variables $X$ and $Y$. Among these, the R{\'e}nyi generalisations of the mutual information introduced by Arimoto~\cite{arimoto1977information}, Csisz{\'a}r~\cite{Csiszar1995}, and Sibson~\cite{sibson1969information}, are particularly noteworthy~\cite{Csiszar1995}, with the Sibson's proposal, referred to as $\alpha$-\textit{mutual information} $I_\alpha$, being the most promising one~\cite{Verd}. In the case of an arbitrary channel $W(y|x)$, $I_\alpha$ is defined as: 
\begin{align}
    I_\alpha(p,W)\doteq \frac{\alpha}{\alpha-1}\log \sum_{y\in\mathcal{Y}} \left[\sum_{x\in \mathcal{X}} p(x) W(y|x)^{\alpha}\right]^{1/\alpha}. 
\end{align}
The $\alpha$-mutual information was originally conceived as a measure of diversity between conditional probability distributions $\{W(\cdot|x)\}_{x\in\mathcal{X}}$ under the name \textit{information radius of order} $\alpha$, see Ref. \refcite{Csiszar1995}. 
In general, $I_\alpha(p,W)$ quantifies different features on the relation between $X$ and $Y$, and the channel $W(y|x)$~\cite{Verd}. A particular relevant quantity is the channel capacity of order $\alpha$:
\begin{align}\label{eq:capacityalpha}
    C_\alpha(W) = \max_{p}I_\alpha(p,W),
\end{align}
which has been recently used in the context of resource theory of measurement informativeness~\cite{Skrzypczyk2019,Ducuara2022} with the channel $W(y|x)$ defined as in Sec. \eqref{sec:motivation}.

The Holevo-R{\'e}nyi inequality, see Theorem \ref{theo:MAIN}, restricts the behaviour of the $\alpha$-mutual information when classical information is encoded using a set of quantum states. In this context, the communication channel is defined as $W(y|x)=W^M_{sq}(y|x)=\Trr{M_y\rho_x}$, as shown in Eq. \eqref{eq:conditionalprob-channel}. Utilizing Eq. \eqref{eq:tightestbounds2} and considering that the $\alpha$-mutual information satisfies the relation~\cite{Csiszar1995}:
\begin{align}
I_\alpha(p,W) = \min_q D_\alpha (P||p\times q),
\end{align}
we can establish the following bound:
\begin{align}\label{eq:boundOnAlphaMutualInfo}
I_\alpha(p,W^M_{sq}) \leq \frac{1}{\alpha-1} \log f_{sq}(p,\alpha),
\end{align}
where $f_{sq}(p,\alpha)$ is given by Eq. \eqref{eq:tightestbounds}. Consequently, the Holevo-R{\'e}nyi inequality provides quantum bounds for the $\alpha$-mutual information and, thereby, for the channel's capacity of order $\alpha$, Eq. \eqref{eq:capacityalpha}, in the one-shot case.

The aforementioned inequality yields a specific bound in the context of reliability functions and error exponents for multi-letter communication channels (i.e., composite channels). Consider the scenario where we aim to send $n$-lettered \textit{codewords}, denoted as $w=(x_1,\cdots,x_n)$, through $n$ uses of the classical-quantum channel $x \to \rho_x$, discussed in Sec. \ref{sec:motivation}. We establish therefore the correspondence:
$$w \to \rho_w=\rho_{x_1}\otimes \cdots \otimes \rho_{x_n}\in B(\mathcal{H}^{\otimes n}).$$
A \textit{codebook} of size $N$ consists of a set of codewords $\mathcal{W}=\{w^i\}_{i=1}^N$. The receiver performs measurements on the quantum states $\rho_w$ to extract information about the transmitted codeword. This defines a \textit{code}, which is a collection of $N$ pairs $(w^i,D_i)$, where $\mathcal{D}=\{D_i\}_{i=1}^N$ is a joint POVM, and the entire procedure constitutes a composite channel to transmit codewords.

The probability of correctly guessing the codeword $w^i$ is given by $\Trr{D_i \rho_{w^i}}$. Consequently, the mean probability of error, indicating the probability of guessing the wrong codeword, is:
\begin{align}\label{eq:proberrQ}
    \overline{p}^q_e(\mathcal{W},\mathcal{D})=\frac{1}{N}\sum_{i=1}^N (1-\Trr{D_i \rho_{w^i}}).
\end{align}
The capacity $\tilde{C}$ of the classical-quantum channel $x\to \rho_x$ is such that the minimal probability of error,
\begin{align}\label{eq:proberrQmin}
\overline{p}^q_e(2^{nR},n)=\min_{\mathcal{W},\mathcal{D}}\overline{p}^q_e(\mathcal{W},\mathcal{D}),
\end{align}
tends to zero for any transmission rate $0\leq R<\tilde{C}\leq 1$ and does not go to zero for $1\geq R>\tilde{C} \geq 0$.

Remarkably, the single-letter Holevo Theorem (Thm.~\ref{theo:Holevo}) provides an inequality between two different kinds of capacity $\tilde{C}$: On one hand, for arbitrary POVM measurements $\mathcal{D}$, the capacity of the classical-quantum channel $x\to \rho_x$ is given by the maximum of the Holevo information $C(p)$ over the input probability distributions $p$, i.e., $\tilde{C}_q=\max_p C(p)$, as defined in Eq. \eqref{eq:Holevoinfo}. On the other hand, for measurements $\mathcal{D}$ restricted to product operators $D_i=M_{x^i_1}\otimes \cdots \otimes M_{x^i_n}$ (i.e., performing individual measurements on the output states $\rho_{x_j}$ instead of on the entire state $\rho_w$), the channel capacity is given by the maximum of accessible information $I_a(p)$, i.e., $\tilde{C}_{sq}=\max_p I_a(p)$. This scenario, in which product measurement operators are employed, is generally referred to as \textit{semi-quantum}~\cite{Ban1998}. We shall denote as $\overline{p}_e^{sq}(2^{nR},n)$ the probability of error defined in Eq. \eqref{eq:proberrQmin} corresponding to the semi-quantum case, i.e. where the minimization is taken over the product POVMs.

The \textit{reliability function} $E(R)$ stands for a more precise assessment of the channel performance~\cite{Alsan2015} than the usual channel capacity $\tilde{C}$, on the grounds that $E(R)$ characterizes the exponential decay of the error probability as $n\to\infty$, for $R<\tilde{C}$. Specifically, when we consider the semi-quantum case (product measures), the composite channel is \textit{memoryless} and therefore the reliability function is~\cite{Csiszar1995}:
\begin{align}
    E_{sq}(R)&= \min_{0\leq s \leq 1} \{\max_p[E^{sq}_0(s)]-sR\}, \text{ with }\\ \label{eq:gallagerfunc}
    E^{sq}_0(s)&=-\log \sum_{y\in\mathcal{Y}} \left[\sum_{x\in \mathcal{X}} p(x) W_{sq}^M(y|x)^{\frac{1}{1+s}}\right]^{1+s},
\end{align}
satisfying
\begin{align}
    \overline{p}^{sq}_e(2^{nR},n)\leq \exp[-nE_{sq}(R)].
\end{align}
In this context, the Holevo-R{\'e}nyi
 inequality establishes a bound for the reliability function in the memoryless case:

\begin{proposition}[Reliability function bound]

Let $s\in(0,1]$ and let $E^{sq}_0(s)$ be the Gallager function defined in Eq. \eqref{eq:gallagerfunc}.
Then the following inequality holds:
\begin{align}\label{eq:SemiQuantumGallagerFunc}
E^{sq}_0(s)\leq -(1+s)\log \sum_{x\in\mathcal{X}}p(x)\text{\emph{Tr}}\! \left[ {\left(\rho_x\overline{\rho}^s\right)^{\frac{1}{1+s}}} \right] \doteq \mathcal{E}(s).
\end{align}
\end{proposition}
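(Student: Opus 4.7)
The plan is to reduce the claim to a change of variables that turns the Gallager function into (a multiple of) Sibson's $\alpha$-mutual information, and then to invoke the Holevo--R\'enyi bound \eqref{eq:boundOnAlphaMutualInfo} already established for $I_\alpha$.

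First I would set $\alpha = 1/(1+s)$. Since $s\in(0,1]$, this gives $\alpha\in[1/2,1)$, and one checks that $(1-\alpha)/\alpha = s$. With this substitution, the sum appearing inside the logarithm in \eqref{eq:gallagerfunc} becomes
\begin{align*}
\sum_{y\in\mathcal{Y}}\Bigl[\sum_{x\in\mathcal{X}} p(x)\, W^M_{sq}(y|x)^{\alpha}\Bigr]^{1/\alpha},
\end{align*}
which, by the definition of the $\alpha$-mutual information, equals $\exp\!\bigl(\tfrac{\alpha-1}{\alpha}I_\alpha(p,W^M_{sq})\bigr)$. Taking the negative logarithm and using $(1-\alpha)/\alpha=s$ I would obtain the identity
\begin{align*}
E^{sq}_0(s) \;=\; s\, I_\alpha(p,W^M_{sq})\quad\text{with}\quad \alpha=\tfrac{1}{1+s}.
\end{align*}

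Next I would apply the bound \eqref{eq:boundOnAlphaMutualInfo}, namely $I_\alpha(p,W^M_{sq})\leq \tfrac{1}{\alpha-1}\log f_{sq}(p,\alpha)$. Because $\alpha\in[1/2,1)$, the first branch of the definition \eqref{eq:tightestbounds} applies, so
\begin{align*}
f_{sq}(p,\alpha)=\sum_{x\in\mathcal{X}} p(x)\,\mathrm{Tr}\!\left[\bigl(\rho_x\,\overline{\rho}^{\,(1-\alpha)/\alpha}\bigr)^{\alpha}\right]
=\sum_{x\in\mathcal{X}} p(x)\,\mathrm{Tr}\!\left[\bigl(\rho_x\,\overline{\rho}^{\,s}\bigr)^{1/(1+s)}\right].
\end{align*}
Since $\tfrac{1}{\alpha-1}=-\tfrac{1+s}{s}$, multiplying the bound on $I_\alpha$ by the factor $s\geq 0$ preserves the inequality direction and yields
\begin{align*}
E^{sq}_0(s)\;\leq\; -(1+s)\,\log \sum_{x\in\mathcal{X}} p(x)\,\mathrm{Tr}\!\left[\bigl(\rho_x\,\overline{\rho}^{\,s}\bigr)^{1/(1+s)}\right]\;=\;\mathcal{E}(s),
\end{align*}
which is the stated inequality.

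The argument is essentially bookkeeping; the only points that require care are the verification that $\alpha=1/(1+s)$ indeed lies in the regime $[1/2,1)$ where the first (unswapped) branch of $f_{sq}(p,\alpha)$ applies, and checking that the inequality direction is preserved when multiplying through by the negative factor $1/(\alpha-1)$ (handled by the maximization over $p$ being compatible with the bound, and by $s>0$). The heart of the result is therefore just the Holevo--R\'enyi inequality of Theorem~\ref{theo:MAIN}, recycled through the classical identity linking Gallager's $E_0$ to Sibson's $I_\alpha$.
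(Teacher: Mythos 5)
Your proposal is correct and follows exactly the route the paper takes: substitute $\alpha=1/(1+s)$ into the bound \eqref{eq:boundOnAlphaMutualInfo} on the $\alpha$-mutual information, note that $\alpha\in[1/2,1)$ selects the first branch of \eqref{eq:tightestbounds}, and convert back via $(1-\alpha)/\alpha=s$ and $s/(\alpha-1)=-(1+s)$. You merely make explicit the identity $E^{sq}_0(s)=s\,I_\alpha(p,W^M_{sq})$ that the paper leaves implicit.
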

To demonstrate the previous proposition we have to take $\alpha=1/(1+s)$ in Eq. \eqref{eq:boundOnAlphaMutualInfo}. As in this case $\alpha\in [1/2,1)$, we find that the corresponding bound is $f_{sq}(p,\alpha)=\sum_x p(x) \Trr{\left(\rho_x\overline{\rho}^{\frac{1-\alpha}{\alpha}}\right)^\alpha}$, because of Eq. \eqref{eq:tightestbounds}.

\vspace{.4cm}

The reliability function in the semi-quantum case was already addressed in the literature by Helstrom~\cite{helstrom1987cut}, Charbit~\cite{charbit1989cutoff}, Bendjaballah~\cite{Bendjaballah1989} and more recently by Takeoka et al.~\cite{Takeoka2010}. 
In Ref. \refcite{charbit1989cutoff}, a similar upper bound to Eq. \eqref{eq:SemiQuantumGallagerFunc} can be found for the cutoff rate $E^{sq}_0(1)$ when the states $\{\rho_x\}_x$ are pure:
\begin{align}\label{eq:cutoffSQ}
    E^{sq}_0(1)\leq -\log \min_p \sum_{x,\tilde{x}\in\mathcal{X}} p_x p_{\tilde{x}} \sqrt{\Trr{\rho_x\rho_{\tilde{x}}}} \doteq \tilde{E}^{sq}_0(1).
\end{align}

In the case of joint measurements and in the particular case of pure signal states $\{\rho_x\}_x$, it was demonstrated that~\cite{burnashev1998reliability}:
\begin{align}
    \overline{p}^q_e(2^{nR},n)\leq \exp[-nE_{q}(R)],
\end{align}
where
\begin{align}
    E_{q}(R)&= \min_{0\leq s \leq 1} \{\max_p[E^q_0(s)]-sR\}, \text{ with }\\ \label{eq:gallagerfuncBH}
    E^q_0(s)&=-\log \Trr{\left(\sum_{x\in\mathcal{X}}p(x)\rho_x^{\frac{1}{1+s}}\right)^{1+s}}.
\end{align}

\begin{figure}
    \centering
    \includegraphics[width=0.7\textwidth]{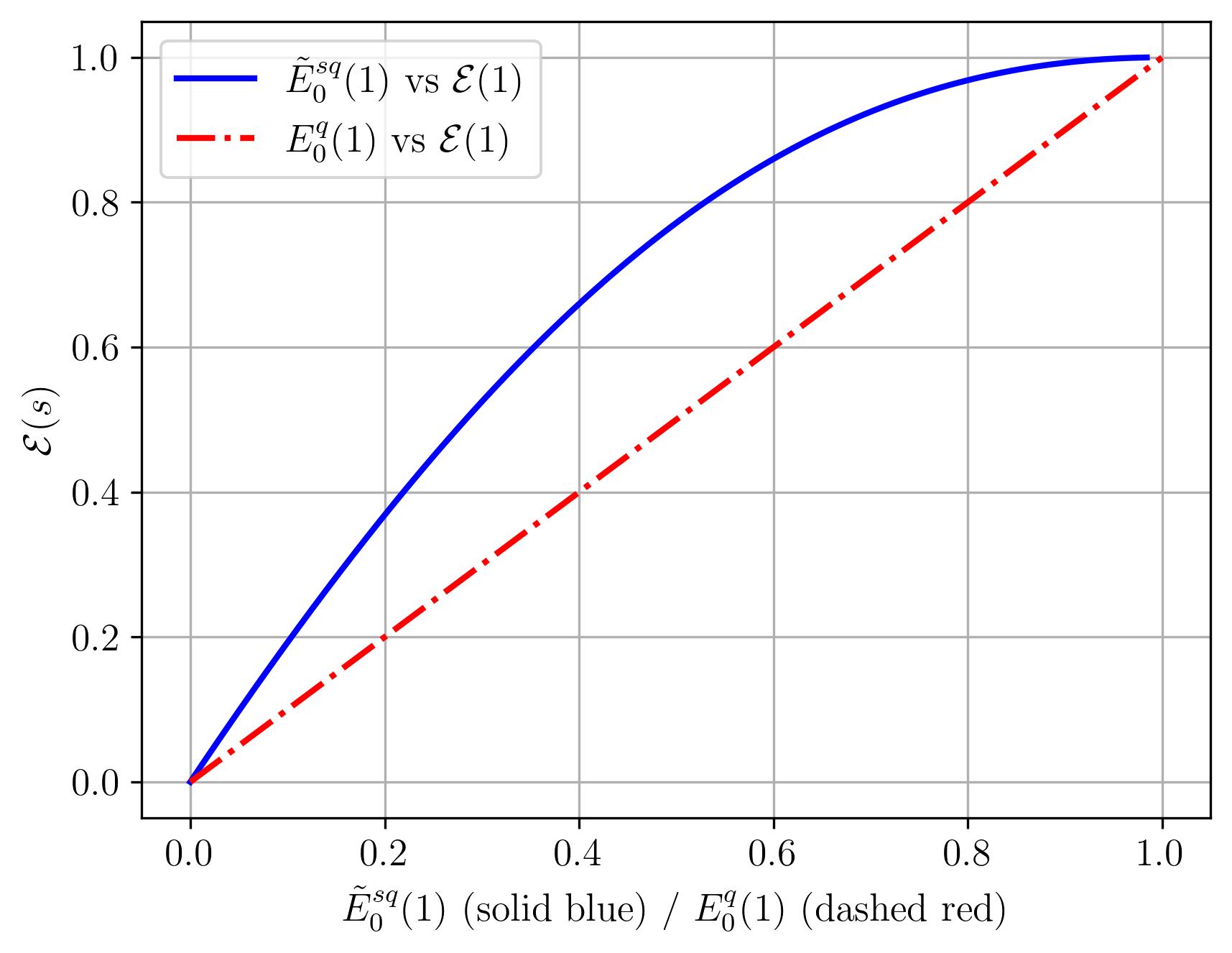}
    \caption{Comparison of the bounds $\mathcal{E}(1)$, Eq. \eqref{eq:SemiQuantumGallagerFunc}, $\tilde{E}_0^{sq}(1)$, Eq. \eqref{eq:cutoffSQ} and $E_0^q(1)$, Eq. \eqref{eq:gallagerfuncBH}, for different pure-state binary channels, defined in Eq. \eqref{eq:binary}. The bound $\mathcal{E}(1)$ is always greater than or equal to $\tilde{E}_0^{sq}(1)$, and coincides with $E_0^q(1)$ for all channels.}
    \label{fig:BoundComp}
\end{figure}

Fig. \ref{fig:BoundComp} displays the dependence between our bound $\mathcal{E}(1)$ and $\tilde{E}_0^{sq}(1)$ for all different classical-quantum binary channels, defined by the correspondence:
\begin{align}\label{eq:binary}
    x_i \to \rho_i=\ket{\psi_i}\!\bra{\psi_i}, \ \ i\in\{0,1\}.
\end{align}
As we can see, it holds $\mathcal{E}(1)\geq\tilde{E}_0^{sq}(1)$. 

On the other hand, in Fig \ref{fig:BoundComp}, we also include the values of the bound $E_0^q(1)$ corresponding to the cutoff in the joint measurement case, Eq. \eqref{eq:gallagerfuncBH}. For all pure-state binary channels, our bound obtained employing the $\alpha$-$z$-R{\'e}nyi relative entropies satisfies $\mathcal{E}(1)=E_0^q(1)$. Moreover, it can be demonstrated that, in this case, $\mathcal{E}(s)=E_0^q(s)$ for all $s\in[0,1]$.

\section{Concluding Remarks}

By revisiting the single-letter communication bounds provided by the generalized Holevo theorem~\cite{bussandri2020generalized}, we have focused on the implications 
of using the $\alpha$-R{\'e}nyi divergences as distance quantifiers in this context. Therefore, we have obtained the \textit{Holevo-R{\'e}nyi inequality}, which stands for an upper bound for the $\alpha$ -R{\'e}nyi divergence between the joint and separable probability distributions, Eq. \eqref{eq:jointprob} and \eqref{eq:separableprobdistr}, respectively (see Theorem \ref{theo:MAIN}), obtaining additionally its tightest bounds, Eqs. \eqref{eq:tightestbounds} and \eqref{eq:tightestbounds2}.

We have delved into practical scenarios where the Holevo-R{\'e}nyi inequality finds relevance: It provides a bound for the $\alpha$-mutual information~\cite{sibson1969information} and, correspondingly, to the channel's capacity of order $\alpha$, a quantity which has gained recent attention in the context of the resource theory of measurement informativeness~\cite{Skrzypczyk2019,Ducuara2022}.

Furthermore, we have also addressed the concept of reliability functions and error exponents in the context of classical-quantum communication channels, deriving a quantum bound $\mathcal{E}(s)$ for the Gallager's reliability function, Eq. \eqref{eq:gallagerfunc}, for the semi-quantum case, in which the classical information is encoded into a set of quantum states ($x \to \rho_x$) and the information is retrieved by the receiver by taking individual measurements over each signal state. We have compared $\mathcal{E}(s)$ to other existing results\cite{burnashev1998reliability,charbit1989cutoff} for the quantum binary channel in the pure state case. 

We have shown therefore that $\mathcal{E}(s)$ limits the quality of the transmission of classical information in the semi-quantum case, in the same way, Holevo's information $C(p)$ bounds the accessible information $I_a(p)$, which for memoryless channels $I_a(p)$ provides the channel capacity: $\tilde{C}_{sq}=\max_p I_a(p)$. Because the Holevo's bound leads to the actual channel capacity when \textit{collective} measurements are allowed $\tilde{C}_q=\max_p C(p)$, one may think that $\mathcal{E}(s)$ would be more representative of the actual reliability function in this case, i.e. for joint POVM performed over $\rho_w=\rho_{x_1}\otimes \cdots \otimes \rho_{x_n}\in B(\mathcal{H}^{\otimes n})$. This statement seems to be supported by Fig. \ref{fig:BoundComp} and by the fact that $\mathcal{E}(s)$ results to be equal to the reliability function demonstrated by Burnashev and Holevo~\cite{burnashev1998reliability}, for binary quantum channels in the case of pure signal. 
%

\section*{Acknowledgments}

\noindent 
D.G.B. and P.W.L. 
are grateful to the Jagiellonian University for the hospitality during their stay in Cracow. They acknowledge financial support from Consejo Nacional de Investigaciones Cient{\'i}ficas y T{\'e}cnicas (CONICET), and by Universidad Nacional de C{\'o}rdoba (UNC), Argentina. K.{\.Z}. is supported by Narodowe Centrum Nauki under the Quantera project number 2021/03/Y/ST2/00193.
G.R.-M. 
acknowledges support
from ERC AdG NOQIA; MICIN/AEI (PGC2018-0910.13039/501100011033, CEX2019-000910-S/10.13039/501100011033, Plan National FIDEUA PID2019-106901GB-I00, FPI; MICIIN with funding from European Union NextGenerationEU (PRTR-C17.I1): QUANTERA MAQS PCI2019-111828-2); MCIN/AEI/ 10.13039/501100011033 and by the “European Union NextGeneration EU/PRTR" QUANTERA DYNAMITE PCI2022-132919 within the QuantERA II Programme that has received funding from the European Union’s Horizon 2020 research and innovation programme under Grant Agreement No 101017733 Proyectos de I+D+I “Retos Colaboración” QUSPIN RTC2019-007196-7); Fundació Cellex; Fundació Mir-Puig; Generalitat de Catalunya (European Social Fund FEDER and CERCA program, AGAUR Grant No. 2021 SGR 01452, QuantumCAT \ U16-011424, co-funded by ERDF Operational Program of Catalonia 2014-2020); Barcelona Supercomputing Center MareNostrum (FI-2023-1-0013); EU (PASQuanS2.1, 101113690); EU Horizon 2020 FET-OPEN OPTOlogic (Grant No 899794); EU Horizon Europe Program (Grant Agreement 101080086 — NeQST), National Science Centre, Poland (Symfonia Grant No. 2016/20/W/ST4/00314); ICFO Internal “QuantumGaudi” project; European Union’s Horizon 2020 research and innovation program under the Marie-Skłodowska-Curie grant agreement No 101029393 (STREDCH) and No 847648 (“La Caixa” Junior Leaders fellowships ID100010434: LCF/BQ/PI19/11690013, LCF/BQ/PI20/11760031, LCF/BQ/PR20/11770012, LCF/BQ/PR21/11840013). 
Views and opinions expressed are of the authors only and do not reflect those of the European Union, European Commission,
nor any other granting authority. 
Neither the European Union nor any granting authority can be held responsible for them.

\renewcommand\bibname{References}
	\bibliographystyle{ws-ijqi}
	\bibliography{library3}

\end{document}